\def\be{\begin{equation}}
\def\ee{\end{equation}}
\def\bea{\begin{eqnarray}}
\def\eea{\end{eqnarray}}
 \def\lh{{\cal H}_\omega}
  \def\qlhz{   C^\infty({\cal H}_{z,\omega}^\ast) }
  \def\sinhc{\,{\rm shc}}
  \def\shc{\,{\rm shc}}
  \def\cosh{\,{\rm ch}}
  \def\sinh{\,{\rm sh}}
\def\1{\'{\i}}
\def\eee{{\rm e}}
\begin{document}

\title*{A unified approach to Poisson--Hopf deformations of Lie--Hamilton systems based on $\mathfrak{sl}(2)$}
\titlerunning{Poisson--Hopf deformations of  $\mathfrak{sl}(2)$ Lie--Hamilton systems}
\author{\'Angel Ballesteros, Rutwig Campoamor-Stursberg,  {Eduardo Fern\'andez-Saiz}, Francisco J. Herranz and Javier de Lucas}
\authorrunning{\'A. Ballesteros {et al.}}
\institute{\'A. Ballesteros \at Departamento de F\1sica, Universidad de Burgos,   E-09001 Burgos, Spain,\\ \email{angelb@ubu.es}
\and R. Campoamor-Stursberg \at  Instituto de Matem\'atica Interdisciplinar I.M.I-U.C.M, Pza.~Ciencias 3, E-28040 Madrid, Spain, \email{rutwig@ucm.es}
\and E. Fern\'andez-Saiz \at Departamento de Geometr\1a y Topolog\1a,  Universidad Complutense de Madrid, Pza.~Ciencias 3, E-28040 Madrid, Spain, \email{eduardfe@ucm.es}
\and  F.J. Herranz \at Departamento de F\1sica, Universidad de Burgos,   E-09001 Burgos, Spain,\\ \email{fjherranz@ubu.es}
\and  J. de Lucas \at Department of Mathematical Methods in Physics, University of Warsaw, Pasteura 5, 02-093, Warszawa, Poland, \email{ javier.de.lucas@fuw.edu.pl }
}
\maketitle

\abstract{Based on a recently developed procedure to construct Poisson--Hopf deformations of Lie--Hamilton systems~\cite{BCFHL}, a novel unified approach 
to non\-equivalent deformations of Lie--Hamilton systems on the real plane with  a Vessiot--Guldberg Lie algebra isomorphic to $\mathfrak{sl}(2)$ is proposed. This, in particular, allows us 
to define a notion of Poisson--Hopf systems in dependence of a parameterized family of Poisson algebra representations. Such an approach is explicitly illustrated by applying it to the three non-diffeomorphic classes of  $\mathfrak{sl}(2)$ Lie--Hamilton systems. Our results cover deformations of the Ermakov system,    Milne--Pinney,  Kummer--Schwarz and several Riccati equations as well as of the harmonic oscillator (all of them with $t$-dependent coefficients). 
Furthermore $t$-independent constants of motion are given as well.   Our methods can be employed to generate other Lie--Hamilton systems and their deformations for other Vessiot--Guldberg Lie algebras and their deformations.\footnote{Based on the contribution presented at the  ``X International Symposium on {\em Quantum Theory and Symmetries}" (QTS-10), June 19-25, 2017, Varna, Bulgaria}}

%%%%%%%%%%%%%%%%%%%%%%%%%%%%%%%%%%%%%%%%%%%%%%%

\section{Introduction}
\label{sec:1}

Since its original formulation by Lie \cite{LS}, nonautonomous
first-order systems of ordinary differential equations admitting a nonlinear superposition rule, the so-called {\it Lie systems}, have been studied extensively (see 
\cite{C135,CGM00,Diss,Ibrag,Pa57,PW,VES,PWb} and references therein). The Lie theorem \cite{CGM07,LS} states that every system of first-order differential equations is a Lie system if and only if it can be described as a curve in a finite-dimensional Lie algebra of vector fields, a referred to as {\it Vessiot--Guldberg Lie algebra}.

Although being a Lie system is rather an exception than a rule \cite{CGL09,In65,In72}, Lie systems have   been shown to be of great interest within physical and mathematical applications (see \cite{Diss} and references therein). Surprisingly, Lie systems admitting a Vessiot--Guldberg Lie algebra of Hamiltonian vector fields relative to a Poisson structure, the {\it Lie--Hamilton systems}, have found even more applications than standard Lie systems with no associated geometric structure \cite{BBHLS,BCHLS13Ham,BHLS, CLS12Ham}. Lie--Hamilton systems admit an additional finite-dimensional Lie algebra of Hamiltonian functions, a {\it Lie--Hamilton algebra}, that allows for the algebraic determination of superposition rules and constants of motion of the system~\cite{BHLS}.

Apart from the theory of quasi-Lie systems \cite{CGL09} and superposition rules for nonlinear operators   \cite{In65, In72}, most approaches to Lie systems rely strongly in the theory of Lie algebras and Lie groups \cite{IN}. However,  the success of quantum groups \cite{Chari,Majid} and the coalgebra formalism within the analysis of superintegrable systems \cite{coalgebra2,coalgebra3,coalgebra1}, and the fact that quantum algebras appear as deformations of Lie algebras suggested the possibility of extending the notion and techniques of Lie--Hamilton systems beyond the range of application of the Lie theory. An approach in this direction was recently proposed in   
\cite{BCFHL}, where a method to construct quantum deformed Lie--Hamilton systems  (LH systems in short) by means of the coalgebra formalism and quantum algebras was given. 

The underlying idea is to use the theory of quantum groups to deform Lie systems and their associated structures. More exactly, the deformation transforms a LH system with its Vessiot--Guldberg Lie algebra into a Hamiltonian system whose dynamics is determined by a set of generators of a Steffan--Sussmann distribution. Meanwhile, the initial Lie--Hamilton algebra (LH algebra in short)  is mapped into a Poisson--Hopf algebra. The deformed structures allow for the explicit   construction of $t$-independent constants of the motion through quantum algebra techniques for the deformed system. 

This work aims to illustrate how the approach introduced in \cite{BCFHL} to construct deformations of LH systems via Poisson--Hopf structures allows for a further systematization that encompasses the nonequivalent LH systems corresponding to isomorphic LH algebras. 
Specifically, we show that Poisson--Hopf deformations of LH systems based on a LH algebra isomorphic to $\mathfrak{sl}(2)$ can be described 
generically, hence providing the deformed Hamiltonian functions and the corresponding deformed Hamiltonian vector fields, once the corresponding 
counterpart  of the non-deformed system is known. 

Moreover  this work also provides a new method to construct LH systems with a LH algebra
isomorphic to a fixed Lie algebra $\mathfrak{g}$. Our approach relies on using the symplectic foliation in $\mathfrak{g}^*$ induced by the Kirillov--Konstant--Souriou bracket on $\mathfrak{g}$. As a particular case, it is explicitly shown how our procedure explains the existence of three types of LH systems on the plane related to a LH  algebra isomorphic to $\mathfrak{sl}(2)$. This is due to the fact that each one of the three different types corresponds to one of the three types of symplectic leaves in $\mathfrak{sl}^*(2)$. Analogously, one can generate the only type of LH systems on the plane admitting a Vessiot--Guldberg Lie algebra isomorphic to $\mathfrak{so}(3)$.

Our systematization permits us to give directly the Poisson--Hopf deformed system from the classification of LH systems~\cite{BBHLS,BHLS},
further suggesting a notion of Poisson--Hopf Lie systems based on a $z$-parameterized family of Poisson algebra morphisms. Our methods seem to be extensible to study also  LH systems and their deformations on other more general manifolds.

The structure of the contribution goes as follows. Section 2 is devoted to introducing the main aspects of  LH systems and Poisson--Hopf algebras.  The general approach to construct Poisson--Hopf algebra deformations of LH systems~\cite{BCFHL} is summarized in Section 3. For our further purposes,   the (non-standard) Poisson--Hopf algebra deformation of  $\mathfrak{sl}(2)$  is recalled in Section 4.  The novel unifying approach to   deformations of Poisson--Hopf  Lie systems with a LH algebra isomorphic to a fixed Lie algebra $\mathfrak{g}$ are treated in Section 5. Such a procedure is explicitly illustrated in Section 6 by applying it to the three non-diffeomorphic classes of  $\mathfrak{sl}(2)$-LH systems on the plane, so obtaining in a straightforward way their corresponding deformation.
Next,  a new method to construct (non-deformed) LH systems is presented in Section 6.   Finally, our results are summarised and the future work to be accomplished is briefly detailed in the last Section.

%%%%%%%%%%%%%%%%%%%%%%%%%%%%%%%%%%%%%%%%%%%%%%%

\section{Lie--Hamilton systems and Poisson--Hopf algebras }
\label{sec:2}

This section recalls the main notions that will be used in the sequel. 

Let $\left\{x_1,\dots, x_n\right\}$ be global coordinates in $\mathbb{R}^n$ and consider a nonautonomous system of
first-order ordinary differential equations 
\begin{equation}\label{Sys}
\frac{{\rm d} x_k}{{\rm d}t}=f_k(t,x_1,\dots ,x_n), \qquad 1\leq k\leq n,
 \label{system}
\end{equation}
where $f_k:\mathbb{R}^{n+1}\rightarrow \mathbb{R}$ are arbitrary
functions. Geometrically, this system   amounts to a $t$-dependent vector field  ${\bf
X}_t:\mathbb{R}\times \mathbb{R}^n\rightarrow {\rm T}\mathbb{R}^n$
given  by
$$
{\bf X}_t:\mathbb{R}\times\mathbb{R}^n\ni (t,x_1,\dots ,x_n)\mapsto
\sum_{k=1}^{n} f_k(t,x_1,\dots  ,x_n)\frac{\partial}{\partial
x_k} \in {\rm T}\mathbb{R}^n .
$$

We say that (\ref{system}) is a {\em Lie system} if its
general solution, ${\bf x}(t)$, can be expressed in terms of a
finite number $m$ of generic particular solutions
$\left\{  \mathbf{y}_{1}(t),\dots ,\mathbf{y}_{m}(t)\right\}  $ and $n$
constants $\left\{
C_{1},\dots,C_{n}\right\}  $ in the form
$$
{\bf x}(t)=\Psi (  \mathbf{y}_{1}(t),\ldots,\mathbf{y}_{m}(t),C_{1}
,\ldots,C_{n} ), 
$$
for a certain function $\Psi:(\mathbb{R}^n)^{m}\times \mathbb{R}^n\rightarrow \mathbb{R}^n$, a so-called {\it superposition rule} of the 
system (\ref{system}). 

The {Lie--Scheffers Theorem}~\cite{CGM00,CGM07,LS,VES} states that $\mathbf{X}_t$ is a Lie system if and only if there exist $t$-dependent functions $b_1(t),\ldots,b_r(t)$ and vector fields ${\bf X}_1,\ldots,{\bf X}_r$ on $\mathbb{R}^n$
spanning an $r$-dimensional real Lie algebra $V$ 
such that 
$$
{\bf X}_t(x,y)=\sum_{i=1}^r b_i(t){\mathbf X}_i.\label{aabb} 
$$
Then, $V$ is called a {\em Vessiot--Guldberg Lie algebra} of ${\bf X}_t$. 

A Lie system is said to be a {\em Lie--Hamilton system} \cite{CLS12Ham} whenever it admits a Vessiot--Guldberg Lie algebra $V$ of Hamiltonian vector fields with respect to a Poisson structure. In our work, we will focus on LH systems on the plane admitting a Vessiot--Guldberg Lie algebra of Hamiltonian vector fields relative to a symplectic structure. It can be proved that all LH systems can be studied around a generic point in this way \cite{BBHLS}.

Hence, the LH systems to be studied hereafter admit a symplectic structure $\omega$ on $\mathbb{R}^2$ that is invariant under Lie derivatives with respect to the elements of $V$, namely
$$
L_{\mathbf{X}_i}\omega =0, \qquad 1 \leq i\leq r.\label{symp}
$$ 
Due to the non-degeneracy of $\omega$, each function $h$ determines uniquely a vector field ${\bf X}_h$, the {\it Hamiltonian vector field} of $h$, such that $\iota_{{\bf X}_h}\omega={\rm d}h$, enabling us to define a Poisson bracket 
$$
\{\cdot,\cdot\}_\omega\ :\ C^\infty (\mathbb{R}^{2} )\times C^\infty (\mathbb{R}^{2} )\rightarrow C^\infty (\mathbb{R}^{2} )
$$
 through the prescription 
 \begin{equation}\label{LB}
\{f,g\}_\omega\mapsto {\bf X}_{g} f.
 \end{equation}
In particular, this implies that $(C^\infty(\mathbb{R}^{2}),\{\cdot,\cdot\}_\omega)$ is a Lie algebra. Similarly, the space ${\rm Ham}(\omega)$ of  Hamiltonian vector fields on $\mathbb{R}^{2}$ relative to $\omega$ is a Lie algebra with respect to the commutator of vector fields. These two Lie algebras are known to be related through the exact sequence (see \cite{Va94} for details): 
$$\label{seq}
0\hookrightarrow \mathbb{R}\hookrightarrow (C^\infty(\mathbb{R}^{2}),\{\cdot,\cdot\}_\omega)\stackrel{\varphi}{\longrightarrow} ({\rm Ham}(\omega),[\cdot,\cdot])\stackrel{\pi}{\longrightarrow} 0,
$$
where $\varphi$ maps every function $h\in C^\infty(\mathbb{R}^2)$ into $-{\bf X}_h$. 

Going back to the theory of LH systems, recall that every LH system admits a Vessiot--Guldberg Lie algebra $V$ of Hamiltonian vector fields relative to an $\omega$. In view of (\ref{LB}), there always exists a finite-dimensional Lie subalgebra $\lh$ of $(C^\infty(\mathbb{R}^{2}),\{\cdot,\cdot\}_\omega)$ containing the Hamiltonian functions of $V$: a so-called {\em  Lie--Hamilton  algebra}  of  the LH system ${\bf X}_t$. 

Let $\mathfrak{g}$ be a Lie algebra isomorphic to $\lh$. This induces the {\it universal enveloping algebra} $U(\mathfrak{g})$ and the {\it symmetric algebra} $S(\mathfrak{g})$ (see \cite{Va84} for details). The second one is the associative commutative algebra of polynomials in the elements of $\mathfrak{g}$, whereas $U(\mathfrak{g})$ is defined to be the tensor algebra of $\mathfrak{g}$ modulo the two-sided ideal generated by the elements $\{v\otimes w-w\otimes v-[v,w]:v,w\in \mathfrak{g}\}$. 

Relevantly, $S(\mathfrak{g})$ and $U(\mathfrak{g})$ are isomorphic as linear spaces \cite{Va84}. They also share a special property: they are Hopf algebras. The Lie bracket of $\mathfrak{g}$ can be extended to $S(\mathfrak{g})$ turning this space into a Poisson algebra. Since the elements of $\mathfrak{g}$ can be considered as linear functions on $\mathfrak{g}^*$, then the elements of $S(\mathfrak{g})$ can be considered as elements of $C^\infty(\mathfrak{g}^*)$, which allows us to ensure that the space $C^\infty (\mathfrak{g}^*  )$ can be endowed with a {\it Poisson--Hopf algebra} structure. 
 
Let us finally recall in this introduction the main properties of Hopf algebras. We recall that an associative algebra $A$  with a {\it product} $m$ and a {\it unit} $\eta$  is said to be a {\em Hopf algebra} over $\mathbb R$
\cite{Abe,Chari, Majid} if there exist two homomorphisms called {\em coproduct}  $(\Delta :
A\longrightarrow A\otimes A )$ and {\em counit} $(\epsilon : A\longrightarrow
\mathbb R)$ satisfying  
$$
({\rm Id}\otimes\Delta)\Delta=(\Delta\otimes {\rm Id})\Delta,  \qquad
({\rm Id}\otimes\epsilon)\Delta=(\epsilon\otimes {\rm Id})\Delta={\rm Id}, 
$$
along with an  antihomomorphism, the {\em antipode} $\gamma :
A\longrightarrow A$,  such that
the following   diagram is commutative:
\medskip

   \centerline{ {\xymatrix@C=0.4em{&A\otimes A\ar[rr]^{{\rm Id}\,\otimes\,  \gamma}&&A\otimes A\ar[rd]^m&&\\
   A\ar[rd]^\Delta\ar[rr]^{\epsilon}\ar[ur]^\Delta&&\mathbb{R}\ar[rr]^{\eta}&&A\\
   &A\otimes A\ar[rr]^{\gamma\,\otimes\, {\rm Id}}&&A\otimes A\ar[ru]^m&}	}}

%%%%%%%%%%%%%%%%%%%%%%%%%%%%%%%%%%%%%%%%%%%%%%%

\section{Poisson--Hopf deformations of Lie--Hamilton   systems}
\label{sec3}

The coalgebra method employed in \cite{BCHLS13Ham} to obtain superposition rules and constants of motion for LH systems on a manifold $M$ relies almost uniquely in the Poisson--Hopf algebra structure related to $C^\infty(\mathfrak{g}^*)$ and a Poisson map 
$$
D:C^\infty(\mathfrak{g}^*)\rightarrow C^\infty(M),$$ 
where we recall that $\mathfrak{g}$ is a Lie algebra isomorphic to a LH algebra, $\mathcal{H}_\omega$, of the LH system.  

Relevantly, quantum deformations allow us to repeat this scheme by substituting  the Poisson algebra $C^\infty(\mathfrak{g}^*)$ with a quantum deformation $C^\infty(\mathfrak{g}_z^*)$, where $z\in \mathbb{R}$, and obtaining an adequate Poisson map 
$$
D_z:C^\infty(\mathfrak{g}_z^*)\rightarrow C^\infty(M).
$$

The above procedure enables us to deform the LH system into a $z$-parametric family of Hamiltonian systems whose dynamic is determined by a Steffan--Sussmann distribution and a family of Poisson algebras. If  $z$ tends to zero, then the properties of the (classical) LH system are recovered by a limiting process, hence enabling to construct new deformations exhibiting physically relevant 
properties. 

In essence, the method for a LH system on an $n$-dimensional  manifold $M$ consists essentially of the following four steps (see  \cite{BCFHL} for details):

\begin{enumerate}

\item Consider a LH system ${\bf X}_t:=\sum_{i=1}^r b_i(t){\bf X}_i$ on  $M$ with respect to a symplectic form $\omega$ and possessing a LH algebra $\mathcal{H}_{\omega}$   spanned by the functions $\left\{h_1,\ldots, h_r\right\}\subset C^\infty(M)$ and structure constants $C_{ij}^k$, i.e.
$$
\{h_i,h_j\}_{\omega}= \sum_{{k=1}}^{r} C_{ij}^{k}h_{k},\qquad 1\leq i,j\leq r.
\label{za}
$$

\item Consider a Poisson--Hopf algebra deformation $\qlhz$ with (quantum) deformation parameter $z\in\mathbb R$ (respectively $q:={\rm e}^z$) as the space of smooth functions $F(h_{z,1},\ldots,h_{z,r})$ for a family of functions $h_{z,1},\ldots,h_{z,r}$  on $M$  such that
\be
\{h_{z,i},h_{z,j}\}_{\omega}= F_{z,ij}(h_{z,1},\dots,h_{z,r}),
\label{zab}
\ee
where the $F_{z,ij}$ are  smooth functions depending also on $z$ satisfying the boundary conditions  
\bea
&& \lim_{z\to 0} h_{z,i}=h_i ,\qquad  \lim_{z\to 0} {\rm grad}\, h_{z,i}={\rm grad} \, h_i,  \nonumber\\
&&   \lim_{z\to 0}   F_{z,ij}(h_{z,1},\dots,h_{z,r }) = \sum_{{k=1}}^{r} C_{ij}^k h_k   .
\label{zac}
\eea

\item Define the deformed vector fields ${\bf X}_{z,i}$ on $M$ according to the rule
\be
\iota_{{\bf X}_{z,i}}\omega={\rm d}h_{z,i},
\label{iot}
\ee
so that
\begin{equation}
\lim_{z\to 0} {\bf X}_{z,i}= {\bf X}_i.
\label{zae}
\end{equation}
\item Define the Poisson--Hopf deformation of the LH system ${\bf X}_t$ as 
$$
{\bf X}_{z,t}:=\sum_{i=1}^r b_i(t){\bf X}_{z,i}.
$$
\end{enumerate}

We stress that  the deformed vector fields $\{ {\bf X}_{z,1},\dots,  {\bf X}_{z,r} \}$ do not generally close on a finite-dimensional Lie algebra. Instead, they span a Stefan--Sussman distribution (see \cite{WA,Pa57,Va94}). Their corresponding commutation relations can be written in terms of the functions $F_{z,ij}$ as~\cite{BCFHL}
\be
[{\bf X}_{z,i},{\bf X}_{z,j}]= -\sum_{k=1}^r\frac{\partial F_{z,ij}}{\partial h_{z,k}}\, {\bf X}_{z,k}.
\label{coff}
\ee

Next, 
to determine the $t$-independent constants of the motion and the superposition rules of a LH system with a LH algebra $\mathcal{H}_\omega$, the coalgebra formalism developed in \cite{BCHLS13Ham} is applied. Let us illustrate this point. Consider the symmetric algebra $S\left(\mathfrak{g}\right)$ of $\mathfrak{g}\simeq\lh$, that can be endowed with a Poisson algebra structure by means of the Lie algebra structure of $\mathfrak{g}$. The Hopf algebra structure with a (non-deformed  trivial) coproduct map $\Delta$ is given  by
$$
 {\Delta} :S\left(\mathfrak{g}\right)\rightarrow
S\left(\mathfrak{g}\right) \otimes S\left(\mathfrak{g}\right)    ,\qquad      {\Delta}(v):=v\otimes 1+1\otimes v,  \qquad    \forall v\in \mathfrak{g}.
\label{baa}
$$
This is easily seen to be a Poisson algebra homomorphism with respect to the Poisson structure on $S(\mathfrak{g})$ and the natural Poisson structure in $S(\mathfrak{g})\otimes S(\mathfrak{g})$ induced by $S(\mathfrak{g})$. Due to density of the functions $S(\mathfrak{g})$ in $C^\infty(\mathfrak{g}^*)$,  the coproduct $\Delta$ can be extended in a unique way to  
$$
 {\Delta} :C^\infty\left(\mathfrak{g}^*\right)\rightarrow
C^\infty\left(\mathfrak{g}^*\right) \otimes C^\infty\left(\mathfrak{g}^*\right).
\label{baa+}
$$
The extension by continuity of the Poisson--Hopf structure in $S(\mathfrak{g})$ to $C^\infty(\mathfrak{g}^*)$ endows the latter with a Poisson--Hopf algebra structure~\cite{BCHLS13Ham}.

Let now  $C=C(v_1,\dots,v_r)$ be a Casimir function of the Poisson algebra $C^\infty\left(\mathfrak{g}^*\right)$, where $v_1,\ldots,v_r$ is a basis for $\mathfrak{g}$. We can define a Lie algebra morphism $\phi:\mathfrak{g}\rightarrow C^\infty(M)$ such that  $h_i:=\phi(v_i)$. The Poisson algebra morphisms 
$$
D: C^\infty\left( \mathfrak{g}^* \right) \rightarrow C^\infty(M),\quad\  D^{(2)} :   C^\infty\left(  \mathfrak{g}^* \right)\otimes C^\infty\left( \mathfrak{g}^* \right)\rightarrow C^\infty(M)\otimes C^\infty(M),
\label{morphisms}
$$
defined by
$$
D( v_i):= h_i({\bf x}_1), \qquad
 D^{(2)} \left( {\Delta}(v_i) \right):= h_i({\bf x}_1)+h_i({\bf x}_2)   ,\qquad 1\leq i\leq  r,
\label{bb}
$$
where
  ${\bf x}_s=\left\{x_{s,1},\dots  ,x_{s,n}\right\}$ $(s=1,2)$ are global coordinates in $M$,  lead to $t$-independent constants of the motion $F^{(1)}:= F$ and $F^{(2)}$  of ${\bf X}_t$  having the form
\be
  F:= D(C),\qquad F^{(2)}:=  D^{(2)} \left( {\Delta}(C) \right).
\label{bc}
\ee

The  very same argument holds to any deformed Poisson--Hopf algebra $C^\infty(\mathfrak{g}_z^*)$ with deformed coproduct $\Delta_z$  and Casimir invariant
$C_z=C_z(v_{z,1},\dots,v_{z,r})$, where $\{ v_{z,1},\dots,v_{z,r} \}$  satisfy the same formal  commutation relations of the $h_{z,i}$ in ({\ref{zab}), and such that
$$
\lim_{z\to 0} \Delta_{z}=\Delta ,\qquad \lim_{z\to 0} v_{z,i}= v_i, \qquad \lim_{z\to 0} C_{z}= C.
$$
Therefore, the deformed Casimir $C_z$ will provide the $t$-independent constants of motion for the deformed LH system ${\bf X}_{z,t}$ through the coproduct $\Delta_z$.

%%%%%%%%%%%%%%%%%%%%%%%%%%%%%%%%%%%%%%%%%%%%%%%

\section{The non-standard Poisson--Hopf algebra deformation of  $\mathfrak{sl}(2)$}
Amongst the LH systems in the plane (see \cite{BBHLS,BCHLS13Ham,BHLS} for details and applications), those with a Vessiot--Guldberg Lie algebra isomorphic to $\mathfrak{sl}(2)$ are of both mathematical and physical  interest; they cover complex Riccati, Milne--Pinney and Kummer--Schwarz equations as well as the harmonic oscillator, all of them with $t$-dependent coefficients.
Furthermore, $\mathfrak{sl}(2)$-LH systems   are related to three non-diffeomorphic Vessiot--Guldberg Lie algebras on the plane \cite{BBHLS,BHLS}. This gives rise to different nonequivalent Poisson--Hopf deformations.

Let us consider  $\mathfrak{sl}(2)$ with the standard basis $\{J_3, J_+,J_-\}$ satisfying the commutation relations
$$
[ J_3,J_\pm  ]  = \pm  2J_\pm     ,\qquad  [J_+ , J_- ] = J_3.
$$
In this basis, the Casimir operator reads 
\be
\mathcal{C}=\frac{1}{2}  J_3^2+(J_+ J_- + J_- J_+).
\label{cas1}
\ee
 Considering the non-standard (triangular or Jordanian) quantum deformation $U_z(\mathfrak{sl}(2))$ of $\mathfrak{sl}(2)$~\cite{Ohn}  (see also  \cite{non,beyond} and references therein), we are led to the following deformed coproduct 
\bea
&& \Delta_z(J_{+})=  J_+ \otimes 1+
1\otimes J_+ , \nonumber\\
&&\Delta_z(J_l)=J_l \otimes \eee^{2 z J_+} + \eee^{-2 z J_+} \otimes
J_l ,\qquad l\in\left\{-,3\right\}
\nonumber
\eea
and the commutation rules 
\bea
&& \left[J_3,J_+\right]_z= 2\,{{\rm shc}(2z J_+)}  J_{+}, \qquad\left[J_+,J_-\right]_z=J_3, \nonumber\\
&& \left[J_3,J_-\right]_z=-J_{-} \,{\rm ch}(2z J_+)-{\rm ch}(2z J_+)J_{-}.
\nonumber
\eea
Here ${\rm shc}$ denotes the cardinal hyperbolic sinus function defined by 
$$
\shc( \xi):=  \left\{ 
\begin{array}{ll}
\frac{\sinh (\xi)}{\xi}, & \ \  \mbox{for}\  \xi\ne 0, \\
1,& \ \  \mbox{for}\  \xi=0.
\end{array}
 \right. 
$$

It is known that every quantum algebra $U_z(\mathfrak{g})$ related to a semi-simple Lie algebra $\mathfrak{g}$ admits an isomorphism of algebras $U_z(\mathfrak{g})\rightarrow U(\mathfrak{g})$ (see \cite[Theorem 6.1.8]{Chari}). This allows us to obtain a Casimir operator of $U_z(\mathfrak{sl}(2))$ out of one, e.g. $\mathcal{C}$ (\ref{cas1}), of $U(\mathfrak{sl}(2))$ in the form (see \cite{Chari} for details)
$$
\mathcal{C}_z=\frac{1}{2}J_3^{2}+  {{\rm shc}( 2z J_{+})}{J_+}J_{-} + J_{-}{J_+} \, {{\rm shc}( 2z J_{+})} +\frac{1}{2}\,{\rm ch}^{2}(2zJ_{+}),
\label{gx}
$$
which, as expected, coincides with  the expression  formerly given in~\cite{beyond}.

There exists a new $z$-parametrized family of deformed Poisson--Hopf structures in  $C^\infty(\mathfrak{sl}_z^*(2))$ denoted by $(C^\infty(\mathfrak{sl}_z^*(2)),\{\cdot,\cdot\}_z)$ and given by the relations 
\begin{equation}
\{v_1,v_2\}_z=-\shc (2z v_1)v_1,\quad 
 \{v_1,v_3\}_z=-2 v_2,\quad
\{v_2,v_3\}_z= -  \cosh(2 z v_1)v_3 ,
\label{gb}
\end{equation}
along with the coproduct 
\be
\Delta_z(v_1)=  v_1 \otimes 1+
1\otimes v_1 ,\quad\ 
\Delta_z(v_k)=v_k \otimes \eee^{2 z v_1} + \eee^{-2 z v_1} \otimes
v_k ,\quad\  k=2,3.
\label{gax}
\ee
The Poisson algebra $C^\infty(\mathfrak{sl}_z(2))$   admits  a Casimir function
\be
 {C}_z=\shc( 2z v_1)\, v_1v_3-v_2^2   ,
\label{gc}
\ee
where  
\be
v_1=   J_+,\qquad v_2 =  \frac{1}{2} J_3,\qquad v_3= -  J_- .\label{box}
\ee

In the limit $z=0$, the Poisson--Hopf structure in $C^\infty(\mathfrak{sl}_z^*(2))$ recovers the standard Poisson--Hopf algebra structure in $C^\infty(\mathfrak{sl}^*(2))$ with    non-deformed coproduct  and  Poisson bracket 
\bea
&& {\Delta}(v_i)=v_i\otimes 1+1\otimes v_i ,\qquad i=1,2,3 ,\nonumber\\
&&\{ v_1,v_2\}=- v_1,\qquad \{ v_1,v_3\} =- 2v_2,\qquad  \{ v_2,v_3\} =-v_3,
\label{brack2}
\eea
as well as the Casimir function 
\be {C}=v_1 v_3 - v_2^2 .
\label{cas2}
\ee

We shall make use of the above Poisson--Hopf algebra  $C^\infty(\mathfrak{sl}_z^*(2))$ in the next section in order to construct
the corresponding deformed LH systems from a unified approach.

%%%%%%%%%%%%%%%%%%%%%%%%%%%%%%%%%%%%%%%%%%%%%%%

\section{Poisson--Hopf deformations of   $\mathfrak{sl}(2)$ Lie--Hamilton systems}
\label{PHL}
 
This section concerns the analysis of Poisson--Hopf deformations of LH systems on a manifold $M$  with a Vessiot--Guldberg algebra isomorphic to $\mathfrak{sl}(2)$. Our geometric analysis will allow us to
introduce the notion of a Poisson--Hopf Lie system that,  roughly speaking,   is a family of nonautonomous Hamiltonian systems of first-order differential equations constructed as a deformation of a LH system by means of the representation of the deformation of a Poisson--Hopf algebra in a Poisson manifold. 
 
Let us  endow a manifold $M$ with a symplectic structure $\omega$ and consider a Hamiltonian Lie group action $\Phi:SL(2,\mathbb{R})\times M\rightarrow M$. A basis of fundamental vector fields of  $\Phi$, let us say $\left\{{\bf X}_1,{\bf X}_2,{\bf X}_3\right\}$, enable us to define a Lie system 
$$
{\bf X}_t=\sum_{i=1}^3b_i(t){\bf X}_i,
$$
for arbitrary $t$-dependent functions $b_1(t),b_2(t),b_3(t)$, and    $\{{\bf X}_1,{\bf X}_2,{\bf X}_3\}$ spanning a Lie algebra isomorphic to $\mathfrak{sl}(2)$. As is well known, there are only three non-diffeomorphic classes of Lie algebras of Hamiltonian vector fields isomorphic to $\mathfrak{sl}(2)$ on the plane \cite{BBHLS,GKO92}. 
Since ${\bf X}_1,{\bf X}_2,{\bf X}_3$ admit Hamiltonian functions $h_1,h_2,h_3$, the $t$-dependent vector field ${\bf X}$ admits a $t$-dependent Hamiltonian function 
$$
h=\sum_{i=1}^3b_i(t)h_i.
$$
Due to the cohomological properties of $\mathfrak{sl}(2)$ (see e.g. \cite{Va94}), the Hamiltonian functions $h_1,h_2,h_3$ can always be chosen so that the space $\langle h_1,h_2,h_3\rangle$ spans a Lie algebra isomorphic to $\mathfrak{sl}(2)$ with respect to $\{\cdot,\cdot\}_\omega$.

Let $\{v_1,v_2,v_3\}$ be the basis for $\mathfrak{sl}(2)$ given in (\ref{box}) and let $M$ be a manifold where the functions $h_1,h_2,h_3$ are smooth. Further, the Poisson--Hopf algebra structure of  $C^\infty(\mathfrak{sl}^*(2))$ is given by (\ref{brack2}). In these conditions, there exists a Poisson algebra morphism $D:C^\infty(\mathfrak{sl}^*(2))\rightarrow C^\infty(M)$ satisfying
$$
D(f(v_1,v_2,v_3))=f(h_1,h_2,h_3),\quad \forall f\in C^\infty(\mathfrak{sl}^*(2)).
$$
Recall that the deformation $C^\infty(\mathfrak{sl}_z^*(2))$ of $C^\infty(\mathfrak{sl}^*(2))$ is a Poisson--Hopf algebra with the new Poisson structure induced by the relations (\ref{gb}). Let us define the submanifold $\mathcal{O}=:\{\theta\in \mathfrak{sl}^*(2): v_1(\theta)\neq0\}$ of $\mathfrak{sl}^*(2)$.  Then, the Poisson structure on $\mathfrak{sl}^*(2)$ can be restricted to the space $C^\infty(\mathcal{O})$. In turn, this enables us to expand the Poisson--Hopf algebra structure in $C^\infty(\mathfrak{sl}^*(2))$ to $C^\infty(\mathcal{O})$. Within the latter space, the elements 
\bea
&& v_{z,1}:=v_1,\qquad v_{z,2}:={\sinhc}(2z v_{1})v_{2},\nonumber\\
&&v_{z,3}:={\sinhc}(2zv_1)\frac{v^2_2}{v_1}+\frac{c}{4{\sinhc}(2zv_1)v_1} ,
\label{nf}
\eea
are easily verified to satisfy the same commutation relations with respect to $\{\cdot,\cdot\}$ as the elements $v_1,v_2,v_3$ in $C^\infty(\mathfrak{sl}_z^*(2))$ with respect to $\{\cdot,\cdot\}_z$ (\ref{gb}), i.e.
\bea
&&
\{v_{z,1},v_{z,2}\}=-{\sinhc}(2zv_{z,1})v_{z,1}, \qquad \{v_{z,1},v_{z,3}\}=-2v_{z,2}, \nonumber\\
&&\{v_{z,2},v_{z,3}\}=-\,{\rm ch}(2zv_{z,1}) v_{z,3}   . 
\label{ggbb}
\eea
In particular, from (\ref{nf}) with $z=0$ we find that
\bea
&&
\{v_{0,1},v_{0,2}\}=-v_{0,1},\qquad
\{v_{0,1},v_{0,3}\}=\frac{2\{v_{0,1},v_{0,2}\} v_{0,2}}{v_{0,1}}=-2v_{0,2},  \nonumber\\
&&\{v_{0,2},v_{0,3}\}=-\frac{v_{0,2}^2}{v_{0,1}^2}\{v_{0,2},v_{0,1}\}-\frac{c}{4v_{0,1}^2}\{v_{0,2},v_{0,1}\}=-\frac{v_{0,2}^2}{v_{0,1}}-\frac{c}{4v_{0,1}}=-v_{0,3}. 
\nonumber
\eea
The functions $v_{z,1},v_{z,2},v_{z,3}$ are not functionally independent, as they satisfy the constraint 
\begin{equation}
{\sinhc}(2zv_{1,z})v_{1,z}v_{3,z}-v^2_{2,z}=c/4.\label{kas}
\end{equation}

The existence of the functions $v_{z,1},v_{z,2},v_{z,3}$ and the relation (\ref{kas}) with the Casimir of the deformed Poisson--Hopf algebra is by no means casual. Let us explain why $v_{z,1},v_{z,2},v_{z,3}$ 
exist and how to obtain them easily. 

Around a generic point $p\in \mathfrak{sl}^*(2)$, there always exists  an open $U_p$ containing $p$ where both Poisson 
structures give a symplectic foliation by surfaces.  Examples of symplectic leaves for $\{\cdot,\cdot\}$ and $\{\cdot,\cdot\}_z$ are displayed in Fig. \ref{Figure1}.

\begin{figure}[t]
\begin{center}
\includegraphics[scale=0.5]{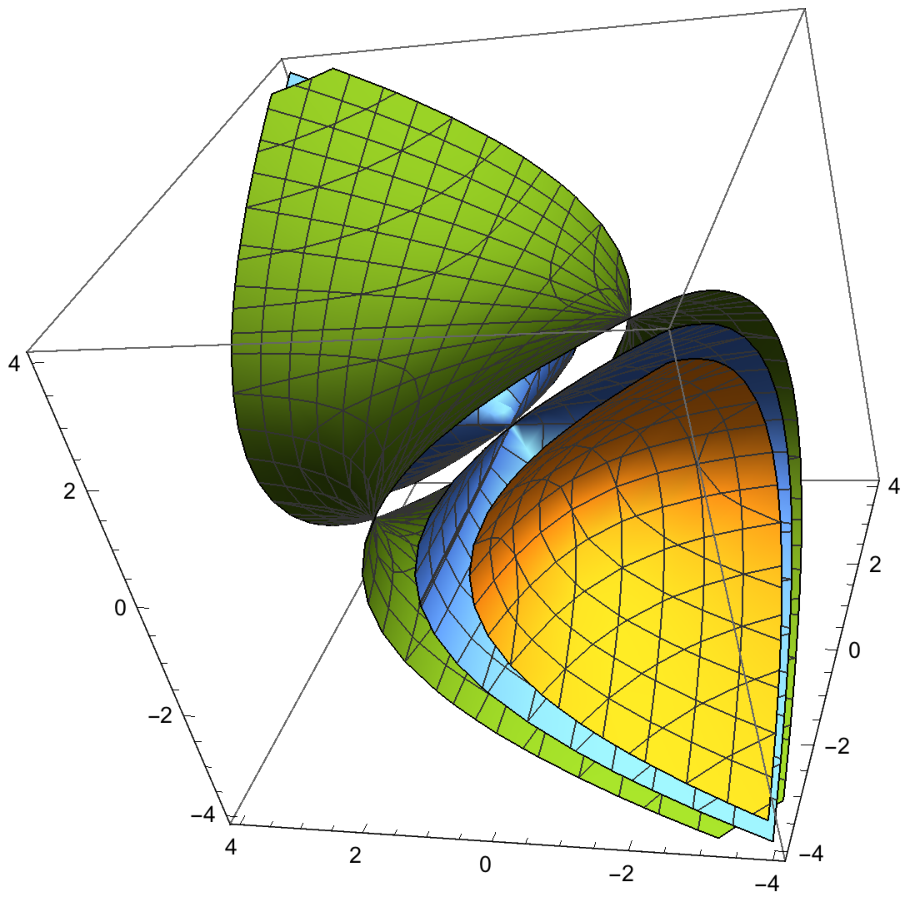}$\qquad\qquad$
\includegraphics[scale=0.5]{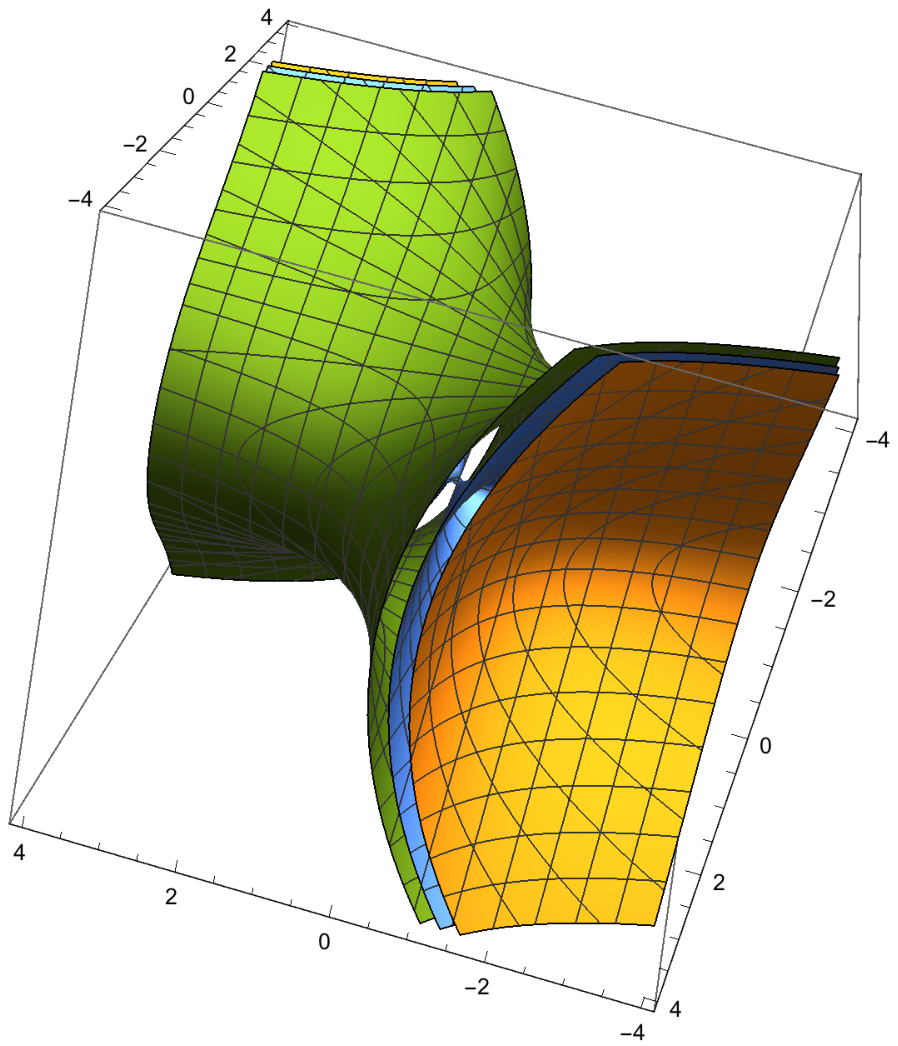}
\end{center}
\caption{Representatives of the submanifolds in $\mathfrak{sl}^*(2)$ given by the surfaces with constant value of the Casimir for the Poisson structure in $\mathfrak{sl}^*(2)$ (left) and its deformation (right). Such submanifolds are symplectic submanifolds where the Poisson bivectors $\Lambda$ and $\Lambda_z$ admit a canonical form.}
\label{Figure1}        
\end{figure}

The splitting theorem on 
Poisson manifolds  \cite{Va94} ensure that if $U_p$ is small enough, then there exist two different coordinate systems $\{x,y,C\}$ and $\{x_z,y_z,C_z\}$ where the Poisson bivectors related to $\{\cdot,\cdot\}$ and $\{\cdot,\cdot\}_z$ read $\Lambda=\partial_{x}\wedge \partial_y$ and $\Lambda_z=\partial_{x_z}\wedge 
\partial_{y_z}$.   Hence, $C_z$ and $C$ are Casimir functions for $\Lambda_z$ and $\Lambda$, respectively. Moreover, $x_z=x_z(x,y,C),y_z=y_z(x,y,C), C_z=C_z(x,y,C)$. 
It follows from this that 
$$
\Phi:f(x_z,y_z,C_z)\in C^\infty_z(U_p)\mapsto f(x,y,C)\in C^\infty(U_p)
$$
 is a Poisson algebra morphism. 

If $\left\{v_1,v_2,v_3\right\}$ are the standard coordinates on $\mathfrak{sl}^*
(2)$ and the relations (\ref{gb}) are satisfied, then $v_i=\xi_i(x_z,y_z,C_z)$ holds for certain functions $\xi_1,\xi_2,\xi_3:\mathbb{R}^3\rightarrow \mathbb{R}$. Hence, the $\hat v_{z,i}=\xi_i(x,y,C)$ close the same commutation 
relations relative to $\{\cdot,\cdot\}$ as the $v_i$ do with respect to $\{\cdot,\cdot\}_z$. As $C$ is a Casimir invariant, the functions $v_{z,i}:=\xi_i(x,y,c)$, with a constant value of $c$, still close the same commutation relations among 
themselves as the $v_i$. Moreover, the functions $v_{z,i}$ become functionally dependent. Indeed,
$$
C_z=C_z(v_1,v_2,v_3)=C_z(\xi_1(x_z,y_z,C_z),\xi_2(x_z,y_z,C_z),\xi_3(x_z,y_z,C_z)).
$$
Hence,
$c=C_z(\xi_1(x_z,y_z,c),\xi_2(x_z,y_z,c),\xi_3(x_z,y_z,c))$ and we conclude that $c=C_z(v_{z,1},v_{z,2},v_{z,3}).$ 

The previous argument allows us to recover the functions (\ref{nf}) in an algorithmic way. Actually, the 
functions $x_z,y_z,C_z$ and $x,y,C$ can be easily chosen to be 
$$
x_z:=v_1,\qquad y_z:=-\frac{v_2}{{\sinhc}(2zv_1)v_1},\qquad C_z:={\sinhc}(2zv_1)v_1v_3-v_2^2,
$$
 as well as 
 $$
 x=v_1,\qquad y=-v_2/v_1,\qquad C=v_1v_3-v_2^2.
 $$
Therefore, 
\bea
&&
\xi_1(x_z,y_z,C_z)=x_z,\qquad \xi_2(x_z,y_z,C_z)=-y_z{\sinhc}(2zx_z)x_z, \nonumber\\
&& \xi_3(x_z,y_z,C_z)=\frac{ C_z+x_z^2y_z^2{\sinhc}^2(2zx_z)}{ {\sinhc}(2zx_z)x_z}.
\nonumber
\eea
Assuming that $C_z=c/4$,  
replacing $x_z,y_z$ by $x=v_1,y=-v_2/v_1$, respectively, and taking into account that $v_{z,i}:=\xi_i(x,y,c)$, 
 one retrieves (\ref{nf}).  

It is worth mentioning that due to the simple form of the Poisson bivectors in splitting  form for three-dimensional Lie algebras, 
this method can be easily applied to such a type of Lie algebras.

Next, the above relations enable us to construct a Poisson algebra morphism 
$$
D_z:f(v_1,v_2,v_3)\in C^\infty(\mathfrak{sl}_z^*(2))\mapsto D(f(v_{1,z},v_{2,z},v_{3,z}))\in C^\infty(M)
$$
for every value of $z$ allowing us to pass the structure of the Poisson--Hopf algebra $C^\infty(\mathfrak{sl}_z^*(2))$ to $C^\infty(M)$. As a consequence, $D_z( {C}_z)$ satisfies the relations 
$$
\{D_z( {C}_z),h_{z,i}\}_\omega=0,\qquad i=1,2,3.
$$
Using the symplectic structure on $M$ and the functions $h_{z,i}$ written in terms of $\left\{h_1,h_2,h_3\right\}$, one can easily obtain the deformed vector fields ${\bf X}_{z,i}$ in terms of the vector fields ${\bf X}_i$. Finally, as ${\bf X}_{z,t}=\sum_{i=1}^3b_i(t){\bf X}_{z,i}$ holds, it is straightforward to verify that the brackets  
$$
{\bf X}_{z,i}D(C_z)=\{D(C_z),h_{z,i}\}=0,
$$
imply that the function $D(C_z)$ is a $t$-independent constant of the motion for each of the deformed LH system ${\bf X}_{z,t}$.

Consequently, deformations of LH-systems based on $\mathfrak{sl}(2)$ can be treated simultaneously, starting from their classical 
LH counterpart. The final result  is summarized in the following statement.

\begin{theorem}\label{MT} If $\phi:\mathfrak{sl}(2)\rightarrow C^\infty(M)$ is a morphism of Lie algebras with respect to the Lie bracket in $\mathfrak{sl}(2)$ and a Poisson bracket in $C^\infty(M)$, then for each $z\in \mathbb{R}$ there exists a Poisson algebra morphism $D_z:C^\infty(\mathfrak{sl}_z^*(2))\rightarrow C^\infty(M)$ such that for a basis $\{v_1,v_2,v_3\}$ satisfying the commutation relations (\ref{brack2}) is given by
\bea
 \!  \! && D_z(f(v_1,v_2,v_3))\nonumber\\
 \!  \! &&=f\! \left( \!  \phi(v_1),{\sinhc}(2z \phi(v_1))\phi(v_{2}), {\sinhc}(2z\phi(v_1))\frac{\phi^2(v_2)}{\phi(v_1)}+\frac{c}{4{\sinhc}(2z\phi(v_1))\phi(v_1)}\right). 
\nonumber
\eea
 Provided that $h_i:=\phi(v_i)$,  the deformed Hamiltonian functions  $h_{z,i}:=D_z(v_i)$ adopt the form
\bea
&&h_{z,1}=    h_1 ,\qquad  h_{z,2}=  {\sinhc}(2z  h_1 ) h_{2}    ,  \nonumber\\
&&h_{z,3}=    {\sinhc}(2z h_1)\frac{h_2^2 }{ h_1}+\frac{c}{4{\sinhc}(2z h_1) h_1}  ,
\nonumber
\eea
which satisfy the    commutation relations (\ref{ggbb}).

The Hamiltonian vector fields ${\bf X}_{z,i}$ associated with  $h_{z,i}$ through (\ref{iot}) turn out to be
\bea
&&
{\bf X}_{z,1}={\bf X}_{1},\qquad {\bf X}_{z,2}=\frac{h_2}{h_1}\bigl(\!{\cosh(2zh_1)}-{{\sinhc}(2zh_1)}\bigr){\bf X}_1+{\sinhc}(2zh_1){\bf X}_2, \nonumber\\
&&
{\bf X}_{z,3}= \left[\frac{h_2^2}{h_1^2}\bigl( {\rm ch}(2z h_1)-2 \,{\rm shc}(2z h_1)\bigr)- \frac{c\; {\rm ch}(2z h_1)}{4h_1^2 {\rm shc}^2(2z h_1)}\right] {\bf X}_1\nonumber\\
&&\qquad \qquad\qquad +2\,\frac{h_2}{h_1}\,{\sinhc}(2zh_1){\bf X}_2 ,\nonumber
\eea
and satisfy the following commutation relations coming from (\ref{coff})
\bea
&&
[{\bf X}_{z,1},{\bf X}_{z,2}]={\rm ch}(2zh_{z,1}){\bf X}_{z,1},\qquad [{\bf X}_{z,1},{\bf X}_{z,3}]=2\,{\bf X}_{z,2},\nonumber\\
&&
[{\bf X}_{z,2},{\bf X}_{z,3}]={\rm ch}(2zh_{z,1}){\bf X}_{z,3}+4 z^2{\rm shc}^2(2z h_{z,1}) h_{z,1}h_{z,3}\,{\bf X}_{z,1} .
\nonumber
\eea
\end{theorem}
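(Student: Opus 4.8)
The statement simply records, in closed form, the output of the construction developed in this section, so the plan is to assemble that construction into the theorem's three blocks: the existence and explicit shape of $D_z$; the deformed Hamiltonians $h_{z,i}$ together with their brackets; and the deformed Hamiltonian vector fields ${\bf X}_{z,i}$ together with their commutators. The organising idea is that $D_z$ factors through the undeformed morphism $D$.

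First I would produce $D_z$. The hypothesis that $\phi$ is a Lie algebra morphism extends $\phi$ uniquely to a Poisson algebra morphism $D:C^\infty(\mathfrak{sl}^*(2))\rightarrow C^\infty(M)$ with $D(v_i)=h_i=\phi(v_i)$, the $v_i$ obeying (\ref{brack2}). The splitting-theorem argument preceding the statement furnishes, on the open set where $v_1\neq0$, the functions $v_{z,1},v_{z,2},v_{z,3}$ of (\ref{nf}) that close, with respect to the undeformed bracket $\{\cdot,\cdot\}$, exactly the relations (\ref{ggbb}) obeyed by $v_1,v_2,v_3$ under the deformed bracket $\{\cdot,\cdot\}_z$. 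Hence $f(v_1,v_2,v_3)\mapsto f(v_{z,1},v_{z,2},v_{z,3})$ is a Poisson morphism from $(C^\infty(\mathfrak{sl}_z^*(2)),\{\cdot,\cdot\}_z)$ into $(C^\infty(\mathfrak{sl}^*(2)),\{\cdot,\cdot\})$, and I would \emph{define} $D_z$ to be its composition with $D$. A composite of Poisson morphisms is a Poisson morphism, which gives existence; inserting (\ref{nf}) and writing $\phi(v_i)$ for $v_i$ yields the displayed closed form of $D_z$.

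Next I would read off $h_{z,i}:=D_z(v_i)$ by direct substitution of (\ref{nf}), obtaining $h_{z,1}=h_1$, $h_{z,2}={\rm shc}(2zh_1)h_2$ and the stated $h_{z,3}$. Their brackets (\ref{ggbb}) are then automatic: since $D_z$ is Poisson, $\{h_{z,i},h_{z,j}\}_\omega=D_z(\{v_i,v_j\}_z)$, and applying $D_z$ to the right-hand sides of (\ref{gb}) reproduces (\ref{ggbb}). For the vector fields I would exploit that each $h_{z,i}$ is a function $G_i(h_1,h_2,h_3)$ of the undeformed Hamiltonians; expanding $\iota_{{\bf X}_{z,i}}\omega={\rm d}h_{z,i}=\sum_j(\partial G_i/\partial h_j)\,{\rm d}h_j$, using $\iota_{{\bf X}_j}\omega={\rm d}h_j$, and invoking non-degeneracy of $\omega$ gives ${\bf X}_{z,i}=\sum_j(\partial G_i/\partial h_j){\bf X}_j$. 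The sole analytic ingredient is the identity $\frac{{\rm d}}{{\rm d}\xi}{\rm shc}(\xi)=\frac{1}{\xi}\bigl({\rm ch}(\xi)-{\rm shc}(\xi)\bigr)$, which turns $\partial_{h_1}h_{z,2}$ and $\partial_{h_1}h_{z,3}$ into the coefficients quoted in the theorem. Finally the vector-field commutators follow by feeding the functions $F_{z,ij}$ from (\ref{ggbb}) into the general formula (\ref{coff}) and differentiating.

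I expect the genuinely delicate point to be the first step, i.e. justifying that $D_z$ is well defined and Poisson: this depends on the splitting-theorem construction of the $v_{z,i}$ on $\{v_1\neq0\}$ and on the verification that they close (\ref{ggbb}) under $\{\cdot,\cdot\}$ rather than under $\{\cdot,\cdot\}_z$. Everything downstream is chain-rule bookkeeping with the cardinal-hyperbolic-sine identities. The one computation that needs care is the coefficient of ${\bf X}_{z,1}$ in $[{\bf X}_{z,2},{\bf X}_{z,3}]$, where $\partial_{h_{z,1}}\bigl({\rm ch}(2zh_{z,1})h_{z,3}\bigr)$ must be re-expressed through ${\rm sh}(\xi)=\xi\,{\rm shc}(\xi)$ before it matches the stated form.
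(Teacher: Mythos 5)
Your proposal is correct and follows essentially the same route as the paper, whose proof of Theorem~\ref{MT} is precisely the in-text construction of Section~\ref{PHL} that you reassemble: the splitting-theorem realisation of $v_{z,1},v_{z,2},v_{z,3}$ on $\{\theta: v_1(\theta)\neq 0\}$, the definition $D_z(f(v_1,v_2,v_3)):=D(f(v_{z,1},v_{z,2},v_{z,3}))$, the chain-rule expression ${\bf X}_{z,i}=\sum_j(\partial h_{z,i}/\partial h_j)\,{\bf X}_j$, and the commutators via (\ref{coff}). One caveat on the final check you single out: computing $-\partial_{h_{z,1}}F_{z,23}=2z\,{\rm sh}(2zh_{z,1})\,h_{z,3}$ and substituting ${\rm sh}(\xi)=\xi\,{\rm shc}(\xi)$ gives the coefficient $4z^2\,{\rm shc}(2zh_{z,1})\,h_{z,1}h_{z,3}$, i.e.\ a single power of ${\rm shc}$, whereas the theorem prints ${\rm shc}^2$ --- your method is sound, but it does not literally ``match the stated form''; this appears to be a typo in the printed statement rather than a gap in your argument.
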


  As a consequence, the deformed Poisson--Hopf system can be generically described in terms of the Vessiot--Guldberg Lie algebra corresponding to the non-deformed LH system as follows: 
\bea
&&{\bf X}_{z,t}=\sum_{i=1}^3 b_i(t) {\bf X}_{z,i}=
 \left[ b_1(t)+ b_2(t)\,\frac{h_2}{h_1}\bigl( \!{\cosh(2zh_1)}-{{\sinhc}(2zh_1)}\bigr)\right] {\bf X}_1\nonumber\\
&&\qquad\qquad\qquad  +b_3(t)\left[\frac{h_2^2}{h_1^2}\bigl( {\rm ch}(2z h_1)-2\, {\rm shc}(2z h_1)\bigr)- \frac{c\; {\rm ch}(2z h_1)}{4 h_1^2 {\rm shc}^2(2z h_1)}\right]{\bf X}_1 \nonumber\\
&&\qquad\qquad\qquad   +\,{\rm shc}(2zh_1)\left(b_2(t)+ 2b_3(t)\frac{ h_2}{h_1}\right) {\bf X}_2. \label{DYF}\nonumber
\eea

This unified approach to nonequivalent deformations of LH systems possessing a common underlying Lie algebra suggests the following definition.

\begin{definition} Let $(C^\infty(M),\{\cdot,\cdot\})$ be a Poisson algebra. A {\it Poisson--Hopf Lie system} is pair consisting of a Poisson--Hopf algebra $C^\infty(\mathfrak{g}_z^*)$ and a $z$-parametrized family of Poisson algebra representations $D_z:C^\infty(\mathfrak{g}_z^*)\rightarrow C^\infty(M)$ with $z\in \mathbb{R}$.
\end{definition}

Next, constants of the motion for ${\bf X}_{z,t}$ can be deduced by applying the coalgebra approach introduced in~\cite{BCHLS13Ham} in the way  briefly described in Section 3. In the deformed case, we consider the Poisson algebra morphisms 
\bea
&& D_z: C^\infty\left( \mathfrak{sl}_z^*(2) \right) \rightarrow C^\infty(M),   \nonumber\\
 && D_z^{(2)} :   C^\infty\left(  \mathfrak{sl}_z^*(2)  \right)\otimes C^\infty\left( \mathfrak{sl}_z^*(2)\right)\rightarrow C^\infty(M)\otimes C^\infty(M),
 \nonumber
\eea 
which by taking into account the coproduct (\ref{gax}) are defined by
\bea 
&& D_z( v_i):= h_{z,i}({\bf x}_1) \equiv   h_{z,i}^{(1)}  , \quad i=1,2,3, \nonumber\\ 
&&  D_z^{(2)} \left( {\Delta}_z(v_1) \right) =h_{z,1}({\bf x}_1)+h_{z,1}({\bf x}_2) \equiv  h_{z,1}^{(2)}   ,  \nonumber \\
&& 
D_z^{(2)} \left( {\Delta}_z(v_k) \right) = h_{z,k}({\bf x}_1)  {\rm e}^{2 z h_{z,1}({\bf x}_2)}  + {\rm e}^{-2 z h_{z,1}({\bf x}_1)} h_{z,k}({\bf x}_2)\equiv  h_{z,k}^{(2)}   ,  \quad k= 2,3,
\nonumber
\eea
where
  ${\bf x}_s$ $(s=1,2)$ are global coordinates in $M$.
We remark that, by construction,  the functions $h_{z,i}^{(2)}$   satisfy the  same Poisson brackets (\ref{ggbb}). 
 Then 
$t$-independent  constants of motion are given by (see (\ref{bc}))
$$
  F_z\equiv F_z^{(1)}:= D_z(C_z),\qquad F_z^{(2)}:=  D_z^{(2)} \left( {\Delta_z}(C_z) \right),
$$
where  $C_z$ is     the deformed Casimir (\ref{gc}). Explicilty, they read
\bea
&&  F_z=  \shc\!\left(2 z h_{z,1}^{(1)}  \right)  h_{z,1}^{(1)}   h_{z,3}^{(1)} - \left( h_{z,2}^{(1)}  \right)^2 = \frac c    4 \,  ,\nonumber\\[2pt]
&& F_z^{(2)}=  \shc\!\left(2 z h_{z,1}^{(2)} \right)  h_{z,1}^{(2)}   h_{z,3}^{(2)} - \left( h_{z,2}^{(2)}  \right)^2  \label{amz} .
\nonumber
\eea

%%%%%%%%%%%%%%%%%%%%%%%%%%%%%%%%%%%%%%

 \section{The three classes of $\mathfrak{sl}(2)$ Lie--Hamilton systems on the plane and their deformation}

 We now apply Theorem 1 to the three  classes of   LH systems in the plane 
with a Vessiot--Guldberg Lie algebra isomorphic to $\mathfrak{sl}(2)$ according to the local classification performed in  \cite{BBHLS}, which was based in the results formerly given in~\cite{GKO92}. Thus the manifold $M=\mathbb R^2$ and the coordinates $\mathbf{x}=(x,y)$. According to  \cite{BBHLS,BHLS}, these three classes are named P$_2$, I$_4$ and I$_5$ and they correspond to a positive, negative and zero value of the Casimir constant $c$, respectively. Recall that these are     non-diffeomorphic, so that  there does not exist  any  local $t$-independent change of variables mapping one into another.

%%%%%%%%%%%%%%%%%%%% Table 1 %%%%%%%%%%%

\begin{table}[t]
\caption{The three  classes of LH systems on the plane with underlying Vessiot--Guldberg Lie algebra isomorphic to $\mathfrak{sl}(2)$. For each class, it is displayed, in this order,  a basis of vector fields ${\bf X}_{i}$, Hamiltonian functions $h_i$, symplectic form $\omega$,   the constants of motion $F$ and $F^{(2)}$  as well as the corresponding specific LH systems.}
\begin{tabular}
[l]{l}
\hline
 \noalign{\smallskip}
$\bullet$ Class P$_2$ with $c= 4>0$\\[4pt]
$\displaystyle{\quad  {\bf X}_{1}=
\frac{\partial}{\partial x} \qquad  {\bf X}_{2}=x\frac{\partial}{\partial x}+y\frac{\partial
}{\partial y} \qquad  {\bf X}_{3}=
 (  x^{2}-y^{2} )  \frac{\partial}{\partial x}+2xy\frac{\partial
}{\partial y}}$\\[6pt]
$\displaystyle{\quad  h_{1}= -\frac{1}{y}  \qquad   h_2= -\frac{x}{y}  \qquad  h_3=  -\frac{x^{2}+y^{2}}{y}  \qquad    \omega=\frac{{\rm d}x\wedge
{\rm d}y}{y^{2}} }$  \\[6pt]
$\displaystyle{\quad F=1 \qquad F^{\left(  2\right)  }=\frac{ (  x_{1}-x_{2} )  ^{2}+ (
y_{1}+y_{2} )  ^{2}}{y_{1}y_{2}} }$  \\ 
 \noalign{\smallskip}
-- Complex Riccati equation\\
--  Ermakov system, Milne--Pinney  and Kummer--Schwarz equations with
$c>0$\\
 \noalign{\smallskip}
 \hline
 \noalign{\smallskip}
$\bullet$ Class I$_4$ with $c= -1<0$\\[4pt]
 $\displaystyle{\quad  {\bf X}_{1}= \frac{\partial}{\partial x}+\frac{\partial}{\partial y}\qquad  {\bf X}_{2}= x\frac{\partial
}{\partial x}+y\frac{\partial}{\partial y} \qquad  {\bf X}_{3}=
x^{2}\frac{\partial}{\partial x}+y^{2}\frac{\partial}{\partial y}}$\\[8pt]
$\displaystyle{ \quad  h_{1}=  \frac{1}{x-y}  \qquad  h_2=\frac{x+y}{2 (  x-y )  } \qquad  h_3=\frac{xy}{x-y}\qquad   
\omega=\frac{{\rm d} x\wedge {\rm d} y}{ (  x-y )^{2}  }}$ \\[8pt]   
$\displaystyle{ \quad F=-\frac{1}{4}   \qquad   F^{ (  2 )  }=-\frac{ (  x_{2}-y_{1} )
 (  x_{1}-y_{2} )  }{ (  x_{1}-y_{1} )   (  x_{2} 
-y_{2} )  }  }$\\
     \noalign{\smallskip}
 -- Split-complex Riccati equation\\
--  Ermakov system, Milne--Pinney  and Kummer--Schwarz equations with
$c<0$\\
-- Coupled Riccati equations\\
     \noalign{\smallskip}
 \hline
 \noalign{\smallskip}
$\bullet$ Class I$_5$ with $c=  0$\\[4pt]
 $\displaystyle{\quad  {\bf X}_{1}=  
\frac{\partial}{\partial x}\qquad   {\bf X}_{2}=  x\frac{\partial}{\partial x}+\frac{y}{2} 
\frac{\partial}{\partial y} \qquad   {\bf X}_{3}=   x^{2}\frac{\partial}{\partial x}+xy\frac{\partial}{\partial y}   }$\\[8pt]
$\displaystyle{ \quad  h_{1}=  -\frac{1}{2y^{2}} \qquad  h_{2}=  -\frac{x}{2y^{2}}  \qquad  h_{3}= -\frac{x^{2}}{2y^{2}}   \qquad
\omega = \frac{{\rm d} x\wedge{\rm d}y}{y^{3}}  }$ \\[8pt] 
$\displaystyle{ \quad F=0\qquad F^{\left(  2\right)  }=\frac{ (  x_{1}-x_{2} )  ^{2}}{  
4y_{1}^2y_{2}^2 }  }$\\
     \noalign{\smallskip}
 -- Dual-Study Riccati equation\\
 --  Ermakov system, Milne--Pinney  and Kummer--Schwarz equations with
$c=0$\\
--  Harmonic oscillator\\
-- Planar diffusion Riccati system\\
  \noalign{\smallskip}
\hline
\end{tabular}
\end{table}

 %%%%%%%%%%%%%%%%%%%%%%%%%%%%%%%

 %%%%%%%%%%%%%%%%%%%% Table 2 %%%%%%%%%%%

\begin{table}[t]
\caption{Poisson--Hopf deformations  of the   three  classes of  $\mathfrak{sl}(2)$-LH systems written in   Table 1.   The symplectic form $\omega$ is the same given in Table 1 and $F\equiv F_z$.}
\begin{tabular}
[l]{l}
\hline
 \noalign{\smallskip}
$\bullet$ Class P$_2$ with $c= 4>0$\\[4pt]
$\displaystyle{   {\bf X}_{z,1}= \frac{\partial}{\partial x} \qquad   {\bf X}_{z,2}=  x\,\mathrm{ch} (  2z/y )
\frac{\partial}{\partial x}+y\,\mathrm{shc}(2z/y)\frac{\partial}{\partial y}     }$  \\[8pt]
$\displaystyle{    {\bf X}_{z,3}=
\left(  x^{2}-\frac{y^{2}}{\mathrm{shc}^{2}(2z/y)}\right)  \mathrm{ch} 
(2z/y)\frac{\partial}{\partial x}+2xy\, \mathrm{shc}(2z/y)\frac{\partial
}{\partial y} }$ \\[8pt]
$\displaystyle{   {\bf h}_{z,1}=  -\frac{1}{y}    \qquad  {\bf h}_{z,2}=  -\frac{x}{y}\,\mathrm{shc} (   {2z}/{y} ) 
\qquad  {\bf h}_{z,3}= -\frac{x^{2} \,\mathrm{shc}^2 (   {2z}/{y} )  +y^{2}}{y\,\mathrm{shc} 
 (   {2z}/{y} ) }}$\\[6pt]
$\displaystyle{   F_z^{(2)}=    \frac{( x_1- x_2)^2 }{ y_1  y_2}\, \shc (2z/  y_1)   \shc (2z/  y_2)   \, \eee^{2z \!/\! y_1}  \eee^{-2z \! /\! y_2} }$\\[8pt]
 $\displaystyle{\qquad\qquad \quad    + \frac{ ( y_1+  y_2)^2}{ y_1  y_2} \,  \frac{\shc^2 (2z/  y_1+ 2z/  y_2) }{  \shc (2z/y_1)   \shc (2z/y_2)  } \,\eee^{2z \! / \! y_1}  \eee^{-2z \!/ \! y_2} }$\\[6pt]
  \noalign{\smallskip}
 \hline
 \noalign{\smallskip}
$\bullet$ Class I$_4$ with $c= -1<0$\\[4pt]
 $\displaystyle{   {\bf X}_{z,1} = \frac{\partial}{\partial x}+\frac{\partial}{\partial y}     }$  \\[8pt]
 $\displaystyle{   
   {\bf X}_{z,2} = 
 \frac{x+y}{2}\,\mathrm{ch}\!\left(\frac{2z}{x-y}\right)\left(  \frac{\partial}{\partial
x}+\frac{\partial}{\partial y}\right)  +\frac{x-y}{2}\,\mathrm{shc}\!\left(\frac
{2z}{x-y}\right)\left(  \frac{\partial}{\partial x}-\frac{\partial}{\partial
y}\right)       }$  \\[8pt]
    $\displaystyle{     {\bf X}_{z,3} = \frac{1}{4} \,\mathrm{ch} \!\left(\frac{2z}{x-y} \right)\left[  (  x+y )
^{2}+ (  x-y )^{2}\mathrm{shc}^{-2}\!\left(\frac{2z}{x-y}\right)\right]  \left(
\frac{\partial}{\partial x}+\frac{\partial}{\partial y}\right)    }$\\[8pt]
 $\displaystyle{\qquad\qquad \quad + \frac{1}{2}\left(  x-y\right)  ^{2}\mathrm{shc}\!\left(\frac{2z}{x-y}\right)\left(
\frac{\partial}{\partial x}-\frac{\partial}{\partial y}\right)  }$\\[8pt]
$\displaystyle{   {\bf h}_{z,1}=   \frac{1}{x-y} \quad  {\bf h}_{z,2}= \frac{(x+y) \,\mathrm{shc}\!\left(  \frac{2z}{x-y}\right)  }{2 (  x-y )  } 
\quad  {\bf h}_{z,3}= \frac{ (  x+y )
^{2}\mathrm{shc}^2\! \left(  \! \frac{2z}{x-y} \right)  - (  x-y )  ^{2} 
}{4 (  x-y ) \, \mathrm{shc}\!\left(\!   \frac{2z}{x-y}\right)  }}$\\[10pt]
$\displaystyle{   F_z^{(2)}= \frac{  ( x_1- x_2+ y_1- y_2)^2   }{4( x_1- y_1)( x_2- y_2)} \, \shc\! \left(\frac{2z} { x_1- y_1}\right)      \shc \!\left(\frac{2z} { x_2- y_2}\right)      \eee^{-\frac{2z}{ x_1- y_1}} \eee^{\frac{2z}{ x_2- y_2}}   }$\\[8pt]
 $\displaystyle{     -  \frac{( x_1+ x_2- y_1- y_2)  \shc\! \left( \!  \frac{2z} { x_1- y_1}+\frac{2z} { x_2- y_2}\right)   }{4( x_1- y_1)( x_2- y_2)}   \left[ \frac{  \eee^{\frac{2z}{ x_2- y_2}}  ( x_1- y_1) }{  \shc \bigl(\frac{2z} { x_1- y_1}\bigr) }   + \frac{    \eee^{-\frac{2z}{ x_1- y_1}} ( x_2- y_2) }{  \shc \bigl(\frac{2z} { x_2- y_2}\bigr) }   \right] }$\\[10pt]
     \noalign{\smallskip}
 \hline
 \noalign{\smallskip}
$\bullet$ Class I$_5$ with $c=  0$\\[4pt]
 $\displaystyle{    {\bf X}_{z,1} =    \frac{\partial}{\partial x}  \qquad  
   {\bf X}_{z,2} =    x\,\mathrm{ch}\!\left(   {z}/{y^{2} 
}\right)  \frac{\partial}{\partial x}+\frac{y}{2}\,\mathrm{shc}\!\left(   
{z}/{y^{2}}\right)  \frac{\partial}{\partial y}  }$   \\[8pt]
    $\displaystyle{     {\bf X}_{z,3} =   x^{2}\,\mathrm{ch}\!\left(   {z}/{y^{2}}\right)  \frac{\partial}{\partial
x}+x y\, \mathrm{shc}\!\left(   {z}/{y^{2}}\right)  \frac{\partial}{\partial
y}      }$\\[6pt]
$\displaystyle{   {\bf h}_{z,1}=  -\frac{1}{2y^{2}}
\qquad  {\bf h}_{z,2}= -\frac{x}{2y^{2}}\,\mathrm{shc}\!\left(   {z}/{y^{2}}\right)  \qquad  {\bf h}_{z,3}=   -\frac
{x^{2}}{2y^{2}} \,\mathrm{shc}\!\left(   {z}/{y^{2}}\right) } $\\[6pt]
$\displaystyle{   F_z^{(2)}=    \frac{( x_1- x_2)^2 }{4  y_1^2  y_2^2} \,\shc \!\left(z/  y_1^2\right)   \shc \!\left(z/  y_2^2\right)  \eee^{z \!/\! y_1^2}  \eee^{-z \! /\! y_2^2} }$ 
     \\[6pt]
   \noalign{\smallskip}
\hline
\end{tabular}
\end{table}

Table 1 summarizes  the three cases, covering vector fields, Hamiltonian functions, symplectic structure and $t$-independent constants of motion. The particular LH systems which are diffeormorphic within each class are also mentioned~\cite{BHLS}. Notice that for all of them  it is satisfy the following commutation relations for the vector fields and Hamiltonian functions (the latter with respect to corresponding $\omega$):
\bea
&&
[{\bf X}_{1},{\bf X}_{2}]={\bf X}_{1},\qquad [{\bf X}_{1},{\bf X}_{3}]=2 {\bf X}_{2},\qquad [{\bf X}_{2},{\bf X}_{3}]={\bf X}_{3},\nonumber\\
&& \{ h_1,h_2\}_\omega = - h_1,\qquad \{ h_1,h_3\}_\omega = -2  h_2,\qquad \{ h_2,h_3\}_\omega = - h_3.
\nonumber
\eea

By applying Theorem 1 with  the results of Table 1 we obtain the corresponding deformations which are displayed in Table 2. 
It is straightforward to verify that the classical 
limit $z\rightarrow 0$ in Table 2 recovers the corresponding starting LH systems and related structures of Table 1, in agreement with   the relations (\ref{zac})  and (\ref{zae}).

%%%%%%%%%%%%%%%%%%%%%%%%%%%%%%%%%%%%%%%%%%%%

\section{A method to construct Lie--Hamilton systems}

Section 5 showed that   deformations of a LH system with a fixed LH algebra $\mathcal{H}_\omega\simeq \mathfrak{g}$ can be obtained through a Poisson algebra $C^\infty(\mathfrak{g}^*)$, a given deformation and a certain Poisson morphism $D:C^\infty(\mathfrak{g}^*)\rightarrow C^\infty(M)$. This section presents a simple method to obtain $D$ from an arbitrary $\mathfrak{g}^*$ onto a symplectic manifold $\mathbb{R}^{2n}$. 

\begin{theorem}\label{MT2} Let $\mathfrak{g}$ be a Lie algebra whose Kostant--Kirillov--Souriau Poisson bracket admits a symplectic foliation in $\mathfrak{g}^*$ with a $2n$-dimensional $\mathcal{S}\subset \mathfrak{g}^*$. Then, there exists a LH algebra on the plane  given by
$$
\Phi:\mathfrak{g}\rightarrow C^\infty\bigl(\mathbb{R}^{2n}\bigr)
$$
relative to the canonical Poisson bracket on the plane.
\end{theorem}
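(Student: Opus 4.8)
The plan is to obtain $\Phi$ as the composition of two canonical constructions: restriction of functions to the symplectic leaf $\mathcal{S}\subset\mathfrak{g}^*$, followed by a Darboux identification of $\mathcal{S}$ with the canonical symplectic $\mathbb{R}^{2n}$. The starting observation is that under the Kostant--Kirillov--Souriau (KKS) bracket an element $v\in\mathfrak{g}$, viewed as the linear function $\theta\mapsto\langle\theta,v\rangle$ on $\mathfrak{g}^*$, satisfies $\{v,w\}_{\mathrm{KKS}}=[v,w]$ for all $v,w\in\mathfrak{g}$ (this is exactly the sign convention producing the relations (\ref{brack2}), as one checks on the basis (\ref{box})). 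Hence the inclusion $\mathfrak{g}\hookrightarrow C^\infty(\mathfrak{g}^*)$ is already a morphism of Lie algebras onto a finite-dimensional Lie subalgebra of the Poisson algebra $(C^\infty(\mathfrak{g}^*),\{\cdot,\cdot\}_{\mathrm{KKS}})$.

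The first step I would carry out is to push this morphism down to the chosen $2n$-dimensional leaf. Since the Hamiltonian vector fields of a Poisson structure are everywhere tangent to its symplectic leaves (the splitting theorem, \cite{Va94}), one has $\{f,g\}_{\mathrm{KKS}}|_{\mathcal{S}}=\{f|_{\mathcal{S}},g|_{\mathcal{S}}\}_{\omega_{\mathcal{S}}}$ for all $f,g\in C^\infty(\mathfrak{g}^*)$, where $\omega_{\mathcal{S}}$ is the symplectic form that $\mathcal{S}$ inherits. Thus restriction $r:f\mapsto f|_{\mathcal{S}}$ is a Poisson algebra morphism onto $(C^\infty(\mathcal{S}),\{\cdot,\cdot\}_{\omega_{\mathcal{S}}})$, and in particular $v\mapsto v|_{\mathcal{S}}$ remains a morphism of Lie algebras $\mathfrak{g}\to C^\infty(\mathcal{S})$. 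Next I would invoke the Darboux theorem on the $2n$-dimensional symplectic manifold $(\mathcal{S},\omega_{\mathcal{S}})$: around a generic point there is a symplectomorphism $\psi$ from an open set of $\mathbb{R}^{2n}$, carrying its canonical symplectic form, onto a neighbourhood in $\mathcal{S}$. Because pullback by a symplectomorphism is a Poisson morphism, setting $\Phi(v):=\psi^{\ast}\!\left(v|_{\mathcal{S}}\right)=(v|_{\mathcal{S}})\circ\psi$ produces the desired Lie algebra morphism $\Phi:\mathfrak{g}\to(C^\infty(\mathbb{R}^{2n}),\{\cdot,\cdot\})$ relative to the canonical bracket. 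Its image is then a finite-dimensional Lie subalgebra of the Poisson algebra, i.e.\ a Lie--Hamilton algebra, and the associated Hamiltonian vector fields $\mathbf{X}_{\Phi(v_i)}$ close, up to the sign fixed by the exact sequence of Section 2, the same structure and span the required Vessiot--Guldberg Lie algebra. This is precisely the mechanism that yields the three classes of Table 1 from the three orbit types of $\mathfrak{sl}^*(2)$.

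The main obstacle is of a global nature: Darboux's theorem is only local, so $\Phi$ is a priori defined on a chart rather than on all of $\mathbb{R}^{2n}$, which is the reason the construction—and the whole framework—is understood around a generic point. A secondary point to settle is the faithfulness of $\Phi$, needed for the image to be an LH algebra genuinely isomorphic to $\mathfrak{g}$ as the surrounding discussion asserts. Injectivity can fail only if some nonzero $v\in\mathfrak{g}$ annihilates $\mathcal{S}$, i.e.\ $\langle\theta,v\rangle$ is constant along the leaf; differentiating along Hamiltonian flows shows this forces $\langle\theta,[w,v]\rangle$ to vanish on $\mathcal{S}$ for all $w$, which does not happen for a generic maximal-dimensional orbit. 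Establishing this genericity cleanly, rather than checking it case by case, is the part of the argument requiring the most care.
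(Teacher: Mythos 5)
Your proposal follows essentially the same route as the paper's own proof: restrict the linear functions $v\in\mathfrak{g}$ to the $2n$-dimensional symplectic leaf (a Poisson morphism preserving the structure constants), then transport them to $\mathbb{R}^{2n}$ via a local Darboux symplectomorphism, the composition being the desired Lie algebra morphism $\Phi$. Your additional remarks on the purely local validity of the construction and on the injectivity of $\Phi$ on maximal-dimensional leaves are sensible refinements, but the core argument coincides with the paper's.
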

\begin{proof}
The Lie algebra $\mathfrak{g}$ gives rise to a Poisson structure on $\mathfrak{g}^*$ through the Kostant--Kirillov--Souriau bracket $\{\cdot,\cdot\}$. This induces a symplectic foliation on $\mathfrak{g}^*$, whose leaves are symplectic manifolds relative to the restriction of the Poisson bracket. Such leaves are characterized by means of the Casimir functions of the Poisson bracket. By assumption, one of these leaves is $2n$-dimensional. In such a case, the Darboux Theorem warrants that the Poisson bracket on each leave is locally symplectomorphic to the Poisson bracket of the canonical symplectic form on $\mathbb{R}^{2n}\simeq T^*\mathbb{R}^n$. In particular, there exists some Darboux coordinates mapping the Poisson bracket on such a leaf into the canonical symplectic bracket on $T^*\mathbb{R}^n$. The corresponding change of variables into the canonical form in Darboux coordinates can be understood as a local diffeomorphism $h:\mathcal{S}_k\rightarrow \mathbb{R}^{2n}$ mapping the Poisson bracket $\Lambda_k$ on the leaf $\mathcal{S}_k$ into the canonical Poisson bracket on $T^*\mathbb{R}^n$. Hence, $h$ gives rise to a canonical Poisson algebra morphism $\phi_h:C^\infty(\mathcal{S}_k)\rightarrow C^\infty(T^*\mathbb{R}^n)$.  

As usual, a basis $\{v_1,\ldots, v_r\}$ of $\mathfrak{g}$ can be considered as a coordinate system on $\mathfrak{g}^*$. In view of the definition of the Kostant--Kirillov--Souriau bracket, they span an $r$-dimensional Lie algebra. In fact, if $[v_i,v_j]=\sum_{k=1}^rc_{ij}^kv_k$ for certain constants $c_{ij}^k$, then $\{v_i,v_j\}=\sum_{k=1}^rc_{ij}^kv_k$. Since $\mathcal{S}_k$ is a symplectic submanifold, there is a local immersion $\iota:\mathcal{S}_k\hookrightarrow \mathfrak{g}^*$ which is a Poisson manifold morphism. In consequence, 
$$
\{\iota^*v_i,\iota^*v_j\}=\sum_{k=1}^rc_{ij}^k\iota^*v_k.
$$
Hence, the functions $\iota^*v_i$ span a finite-dimensional Lie algebra of functions on $\mathcal{S}$. Since $\mathcal{S}$ is $2n$-dimensional, there exists a local diffeomorphism $\phi:\mathcal{S}\rightarrow \mathbb{R}^{2n}$ and
$$
\Phi:v\in \mathfrak{g}\mapsto \phi\circ\iota^*v\in C^\infty\bigl(\mathbb{R}^{2n} \bigr)
$$
is a Lie algebra morphism.
\end{proof}

Let us apply the above to   explain the existence of three types of LH systems on the plane. We already know that the Lie algebra $\mathfrak{sl}(2)$ gives rise to a Poisson algebra in $C^\infty(\mathfrak{g}^*)$. In the standard basis $v_1,v_2,v_3$ with commutation relations (\ref{brack2}), the Casimir is (\ref{cas2}).
It turns out that the symplectic leaves of this Casimir are of three types: 
\begin{itemize}
 \item A one-sheet hyperboloid when $v_1v_3-v_2^2=k<0$.
 
 \item A conical surface when $v_1v_3-v_2^2=0$.
 
\item A two-sheet hyperboloid when $v_1v_3-v_2^2=k>0$.
\end{itemize}

In each of the three cases we have the Poisson bivector
$$
\Lambda=-v_1\frac{\partial}{\partial v_1}\wedge\frac{\partial}{\partial v_2}-2v_2\frac{\partial}{\partial v_1}\wedge\frac{\partial}{\partial v_3}-v_3\frac{\partial}{\partial v_2}\wedge\frac{\partial}{\partial v_3}.
$$
Then, we have a changes of variables passing from the above form into Darboux coordinates
$$
\bar v_1=v_1,\qquad \bar v_2=-v_2/v_1,\qquad  C= v_1v_3-v_2^2.
$$
Then,
$$
v_1=v_1,\qquad v_2=-\bar v_1 \bar v_2,\qquad  v_3=(C+\bar v^2_1\bar v^2_2)/\bar v_1.
$$ 
On a symplectic leaf, the value of $C$ is constant, say $C=c/4$, and the restrictions of the previous functions to the leaf read
$$
\iota ^*v_1=v_1,\qquad \iota^*v_2=-\bar v_1\bar v_2,\qquad  \iota^*v_3=c/(4\bar v_1)+\bar v_1\bar v^2_2.
$$
This can be viewed as a mapping $\Phi:\mathfrak{sl}(2)\rightarrow C^\infty(\mathbb{R}^2)$ such that
$$
\phi(v_1)=x,\qquad \phi(v_2)=-xy,\qquad \phi(v_3)=c/(4x)+xy^2,
$$
which is obviously a Lie algebra morphism relative to the standard Poisson bracket in the plane. It is simple to proof that when $c$ is positive,  negative  or zero, one obtains three different types of Lie algebras of functions and their associated vector fields span the Lie algebras P$_2$, I$_4$ and I$_5$  as enunciated in \cite{BBHLS}. Observe that since $\phi(v_1)\phi(v_3)-\phi(v_2)^2=c/4$, there exists no change of variables on $\mathbb{R}^2$ mapping one set of variables into another for different values of $c$. Hence, Theorem \ref{MT2} ultimately explains the real origin of all the  $\mathfrak{sl}(2)$-LH systems on the plane.

It is known that $\mathfrak{su}(2)$ admits a unique Casimir, up to  a proportional constant, and the
symplectic leaves induced in $\mathfrak{su}^*(2)$ are spheres. The application of the previous method originates a unique Lie algebra representation, which gives rise to the unique LH system on the plane related to $\mathfrak{so}(3)$. All the remaining LH systems on the plane can be generated in a similar fashion. The deformations of   such Lie algebras will generate all the possible deformations of LH systems on the plane.

%%%%%%%%%%%%%%%%%%%%%%%%%%%%%%%%%%%%%%%%%%%%%%%%%

 \section{Concluding remarks}

It has been shown that Poisson--Hopf deformations of LH systems based on the simple Lie algebra $\mathfrak{sl}(2)$ can be formulated simultaneously by 
means of a geometrical argument, hence providing a generic description for the deformed Hamiltonian functions and vectors fields, starting from the corresponding
classical counterpart. This allows for a direct determination of the deformed Hamiltonian functions and vector fields, as well as their corresponding Poisson brackets and   commutators, by mere insertion of the data corresponding to the non-deformed LH system.  This procedure has been explicitly illustrated  by obtaining the deformed results 
of Table 2 from the classical ones of Table 1 through the application of Theorem 1.

Moreover we have explained a method to obtain (non-deformed) LH systems related to a LH algebra $\mathcal{H}_\omega$ by using the symplectic foliation in $\mathfrak{g}^*$, where $\mathfrak{g}$ is isomorphic to $\mathcal{H}_\omega$, which has been stated in Theorem 2. This result could further be   applied in order to obtain deformations of LH systems   beyond $\mathfrak{sl}(2)$.   It is also left to accomplish  the deformation of    LH systems     in other spaces of higher dimension.

It seems that the techniques provided here are potentially sufficient to provide a solution to the above mentioned problems. These will be the subject of further work currently in progress.

%%%%%%%%%%%%%%%%%%%%%%%%%%%%%%%%%%%%%%%%%%%%%%%%%

 \begin{acknowledgement}
A.B.~and F.J.H.~have been partially supported by Ministerio de Econom\'{i}a y Competitividad (MINECO, Spain) under grants MTM2013-43820-P and   MTM2016-79639-P (AEI/FEDER, UE), and by Junta de Castilla y Le\'on (Spain) under grants BU278U14 and VA057U16.  The research of R.C.S.~was partially   supported by grant MTM2016-79422-P  (AEI/FEDER, EU).
E.F.S.~acknowledges a fellowship (grant CT45/15-CT46/15) supported by the  Universidad Complutense de Madrid.   J.~de L.~acknowledges funding from the Polish National Science Centre under grant HARMONIA 2016/22/M/ST1/00542.
\end{acknowledgement}

%%%%%%%%%%%%%%%%%%%%%%%%%%%%%%%%%%%%%%%%%%%%%%%%%

\end{document}